\tikzset{>=latex'}
\definecolor{c1}{rgb}{0,0,0}
\definecolor{c2}{rgb}{0,0,0}
\definecolor{c3}{rgb}{0,0,0}
\definecolor{res}{rgb}{0,0,1} 
\definecolor{rev}{rgb}{0,0,0}
\newsavebox\MBox
\newtheorem{theorem}{Theorem}
\definecolor{C1}{rgb}{0,0,0}
\newcommand{\X}{\boldsymbol{X}} 
\newcommand{\Y}{\boldsymbol{Y}} 
\definecolor{highlight1}{rgb}{0.95,0.95,0.9}
\definecolor{highlight_S1}{rgb}{0.1,0.5,0.1}
\definecolor{highlight_S2}{rgb}{0.1,0.1,0.5}
\newcommand{\hilightone}[1]{\colorbox{highlight1}{#1}} 
\definecolor{highlight2}{rgb}{0.9,0.95,0.95}
\newcommand{\hilighttwo}[1]{\colorbox{highlight2}{#1}} 
\newcommand{\D}{A}
\newcommand{\I}{D} 
\newcommand{\J}{E} 
\newcommand{\K}{F} 
\newcommand{\CC}{C} 
\newcommand{\DD}{B} 
\newcommand{\KK}{G}
\begin{document}

\IEEEoverridecommandlockouts
\title{Resolving Entanglements in Topological Interference Management with Alternating Connectivity}
\author{
\IEEEauthorblockN{Soheil Gherekhloo, Anas Chaaban, and Aydin Sezgin}
\IEEEauthorblockA{Chair of Communication Systems, 
RUB, Germany\\
Email: { \{soheyl.gherekhloo, anas.chaaban, aydin.sezgin\}@rub.de}}
\thanks{This work is supported by the German Research Foundation, Deutsche Forschungsgemeinschaft (DFG), Germany, under grant SE 1697/10.}
}

\maketitle
\vspace{-1cm}
\begin{abstract}

The sum-capacity of a three user interference wired network for time-varying channels is considered. Due to the channel variations, it is assumed that the transmitters are only able to track the connectivity between the individual nodes, thus only the (alternating) state of the network is known.    
By considering a special subset of all possible states, we show that state splitting combined with joint encoding over the alternating states is required to achieve the sum-capacity.
Regarding upper bounds, we use a genie aided approach to show the optimality of this scheme. 
This highlights that more involved transmit strategies are required for characterizing the degrees of freedom even if the transmitters have heavily restricted channel state information.
\end{abstract}
\section{Introduction}
The optimal management of interference within an interference limited wireless network is a challenging task due to scarce resources such as frequency bandwidth, power, and time. Certainly, the more accurate is the channel state information at the transmitters (CSIT), the more effective is the interference management. However, providing perfect CSIT is a challenging issue in wireless networks, especially for networks with high mobility and size. Due to this, interference management for different setups based on imperfect CSIT attracted the attention of researchers.

In \cite{MaddahaliTse}, the case of completely stale CSIT (using the so-called retrospective interference alignment (IA)) have been considered. For the  multiple-input and multiple-output (MIMO) broadcast channels, they have shown that even completely delayed CSIT can be useful. 

Note that, the quality of CSIT might vary during the overall transmission. To this end, a degrees of freedom (DoF) study based on a mixture of CSIT was addressed in~\cite{GouJafar_mixedCSIT} and~\cite{ShengKobayashiGesbertYi}. Interestingly, it was shown that splitting an alternating CSIT problem into separate setups with fixed CSIT is not optimal. 

As most wireless networks are rather heterogeneous in terms of node mobility and capability, the CSI quality at the transmitters is not the same for all users. This was considered in~\cite{TandonJafarShamaiPoor}, in which users have either perfect, delayed, or no CSIT at all. Similarly, the capacity region of the two-user binary fading channel was characterized in~\cite{VahidMaddahALiAvestimehr} for different models of availability of CSIT.

In \cite{Jafar}, the CSIT is obtained by having a feedback of $1$ bit. Essentially, this feedback provides information about presence or absence of a link. In more detail, it is assumed that a link is absent if its corresponding interference noise ratio (INR) is lower than $1$. Due to this assumption, the information available at the transmitter does not exceed the topology of the network. To this end, the DoF optimal design of networks with $1$ bit feedback is referrred to as ``topological interference management" (TIM)~\cite{Jafar}. Interestingly, it is shown in~\cite{Jafar} that TIM for linear wired and wireless networks reduces to a single problem. In other words, solving one of these problems leads to the solution for the other one, in such a way that the DoF \textcolor{C1}{of} a linear wireless \textcolor{C1}{network leads to} the capacity of the corresponding linear wired channel, \textcolor{c2}{or vice versa}. 
\textcolor{rev}{In~\cite{Jafar2012,NaderializadehAvestimehr2013}, the robustness of topological interference alignment in the fast fading scenario is studied under the assumption that the topology of the network is fixed during the communication.}

While~\cite{Jafar,Jafar2012,NaderializadehAvestimehr2013} considered fixed connectivity within the network, alternating connectivity due to the underlying channel being time-variant was considered in~\cite{SunGengJafar}. It was shown that the DoF for several multi-user channels can only be achieved by joint encoding across \textcolor{C1}{connectivity states}, each having a certain probability of occurrence. The DoF for three user interference channel with two connectivity states was considered in~\cite{SunGengJafar}. \textcolor{rev}{In \cite{GherekhlooChaabanSezginIZS14}, the DoF of a three user interference channel is characterized for the case that} the network has a Wyner-type channel flavor~\cite{Wyner}. 
Note that, the connectivity states were assumed to be equiprobable in~\cite{SunGengJafar} and~\cite{GherekhlooChaabanSezginIZS14}. In this work, we consider the three user interference channel with states of non-equal probabilities. As it turns out, the schemes in~\cite{SunGengJafar, GherekhlooChaabanSezginIZS14} are not enough to characterize the DoF for the general case. In order to highlight this issue, we consider a subset of all possible connectivity states. 
\textcolor{rev}{In more detail, we distinguish between two transmission schemes which use joint encoding: 
\begin{itemize}
\item JESS: Joint encoding across connectivity states and decoding using a single interference resolving state as in~\cite{SunGengJafar, GherekhlooChaabanSezginIZS14}.
\item JEMS: Joint encoding across connectivity states and decoding using multiple interference resolving states.
\end{itemize}}
\newcommand{\startnodesexp}[7]{
\node at (0.5,0) [above] {#1};
\node (t1) at (0,0) [inner sep=0]{};
\node (t1s) at (0,0) [inner sep=0,left]{#2};
\node (t2) at (0,-1) [inner sep=0]{};
\node (t2s) at (0,-1) [inner sep=0,left]{#3};
\node (t3) at (0,-2) [inner sep=0] {};
\node (t3s) at (0,-2) [inner sep=0,left] {#4};
\node (r1) at (2,0) [inner sep=0] {};
\node (r1s) at (2,0) [inner sep=0,right] {#5};
\node (r2) at (2,-1) [inner sep=0] {};
\node (r2s) at (2,-1) [inner sep=0,right] {#6};
\node (r3) at (2,-2) [inner sep=0] {};
\node (r3s) at (2,-2) [inner sep=0,right] {#7};
\draw[->] (t1) to (r1);
\draw[->] (t2) to (r2);
\draw[->] (t3) to (r3);}
\newcommand{\redpathexp}[2]{\draw[dashed,->,red] (#1) to (#2);}
\begin{figure} 
\centering
\begin{tikzpicture}
\startnodesexp{}{Tx~1}{Tx\,2}{Tx\,3}{Rx\,1}{\,2}{Rx\,3}
\redpathexp{t1}{r2}
\redpathexp{t1}{r3}
\redpathexp{t2}{r1}
\redpathexp{t2}{r3}
\redpathexp{t3}{r1}
\redpathexp{t3}{r2}
\end{tikzpicture}
\caption{Three Tx want to communicate with their desired Rx over a time-variant channel. 
The desired links (solid lines) exist always. However, the interference links (dashed lines) are not always present.}
\label{fig:System_model}
\end{figure}
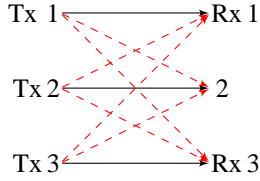
\newcommand{\startnodes}[1]{
\node at (0.5,0) [above] {#1};
\node (t1) at (0,0) [inner sep=0] {};
\node (t2) at (0,-0.4) [inner sep=0] {};
\node (t3) at (0,-0.8) [inner sep=0] {};
\node (r1) at (1,0) [inner sep=0] {};
\node (r2) at (1,-0.4) [inner sep=0] {};
\node (r3) at (1,-0.8) [inner sep=0] {};
\draw[->] (t1) to (r1);
\draw[->] (t2) to (r2);
\draw[->] (t3) to (r3);}
\newcommand{\redpath}[2]{\draw[->,red] (#1) to (#2);}
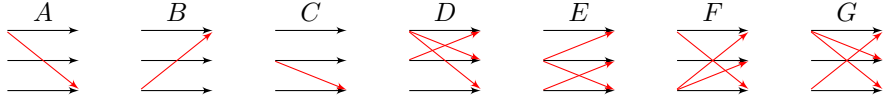
\begin{figure*} 
\centering
\begin{tikzpicture}
\startnodes{$\D$}
\redpath{t1}{r3}
\end{tikzpicture}
\hspace{5mm}
\begin{tikzpicture}
\startnodes{$\DD$}
\redpath{t3}{r1}
\end{tikzpicture}
\hspace{5mm}
\begin{tikzpicture}
\startnodes{$\CC$}
\redpath{t2}{r3}
\end{tikzpicture}
\hspace{5mm}
\begin{tikzpicture}
\startnodes{$\I$}
\redpath{t1}{r3}
\redpath{t2}{r1}
\redpath{t1}{r2}
\end{tikzpicture}
\hspace{5mm}
\begin{tikzpicture}
\startnodes{$\J$}
\redpath{t2}{r3}
\redpath{t2}{r1}
\redpath{t3}{r2}
\end{tikzpicture}
\hspace{5mm}
\begin{tikzpicture}
\startnodes{$\K$}
\redpath{t1}{r3}
\redpath{t3}{r1}
\redpath{t3}{r2}
\end{tikzpicture}
\hspace{5mm}
\begin{tikzpicture}
\startnodes{$\KK$}
\redpath{t1}{r3}
\redpath{t3}{r1}
\redpath{t1}{r2}
\end{tikzpicture}
\caption{All possible states under consideration.}
\label{fig:All_cases}
\end{figure*}
In general, removing the interference might have to be performed over a set of states (JEMS). In addition to this, state splitting is also required. 
The optimality of our proposed scheme is shown by comparing the achievable sum-rate with a genie aided upper bound.
\section{System Model}
Consider a wired network with three transmitters (Tx), which want to communicate with their desired receivers (Rx). Tx\,$i$, $i\in\{1,2,3\}$ wants to send a message $W_i$ to Rx\,$i$ (see Fig.~\ref{fig:System_model}). It encodes this message into a length-$n$ sequence $\X_i= (X_i(1),\ldots,X_i(n))$ \textcolor{C1}{and sends this sequence}. The received symbol at Rx\,${j}$ in \textcolor{c3}{the} $k$th channel use is given by 
\begin{align}
Y_j(k) = \sum_{i=1}^3 h_{ji}(k) X_i(k), \quad \forall j\in\lbrace 1, 2, 3 \rbrace \label{eq:received_symbol},
\end{align}
where $X_i(k)$ and $h_{ji}(k)$ denote the transmitted symbol by Tx\,$i$ and the time-variant channel coefficient corresponding to the link between Tx\,$i$ and Rx\,${j}$, respectively. All symbols are chosen from a Galois Field \textcolor{c3}{($\mathbb{GF}$)}. Moreover, the linear operations are \textcolor{c2}{performed} over this $\mathbb{GF}$. The capacity of each \textcolor{c3}{point to point} channel is $\log|\mathbb{GF}|$, where $|\mathbb{GF}|$ represents the cardinality of $\mathbb{GF}$. Therefore, \textcolor{C1}{only} one symbol can be transmitted over a link \textcolor{C1}{per} channel use. 

In our model, CSIT is restricted only to the topology of the network. Therefore, the only information available \textcolor{C1}{to} the transmitters is about \textcolor{C1}{the} presence \textcolor{C1}{or absence} of links but not about the channel coefficients.\footnote{\textcolor{rev}{In order to make the available CSIT strictly weaker than perfect CSI, we set the cardinality of $\mathbb{GF}$ larger than $2$.}} However, \textcolor{c3}{each Rx knows its} channel coefficients \textcolor{c3}{in addition to} the topology of the network. 

Since the channel coefficients change \textcolor{c3}{over time}, the connectivity of the network varies during the transmission. \textcolor{c3}{Note that each connectivity state occurs with a certain probability.} 



It is worthy to note that the receivers start the decoding after receiving \textcolor{C1}{a complete sequence} $\Y_j$ \textcolor{c3}{as there are no latency constraints}. Therefore, the order of occurrence of the states is not important. \textcolor{c2}{Let $\mathcal{A}$ be a subset of the set of states shown in Fig. \ref{fig:All_cases}} and $\boldsymbol{X}_{i,\mathcal{A}}$ be the sequence of transmitted symbols by Tx\,$i$ in all states in $\mathcal{A}$. Assuming a length-$n$ sequence $\X_i$, \textcolor{c3}{in which $n$ is sufficiently large,} the length of $\boldsymbol{X}_{i,\mathcal{A}}$ is $n \lambda_\mathcal{A}$, where $\lambda_\mathcal{A}$ denotes the sum of the probabilities of the states in $\mathcal{A}$. Note that if the set $\mathcal{A}$ includes one single state, we drop braces for simplicity in writing. For example, $\boldsymbol{X}_{i,A}$ is the sequence of transmitted symbols by Tx\,$i$ in $\mathcal{A} = \{A\}$. \textcolor{rev}{Note that $\lambda_{\{A,B,C,D,E,F,G\}}=1$.}
%

\textcolor{rev}{The probability of error, achievable rates $R_1$, $R_2$, $R_3$, and sum-capacity are defined in the standard Shannon sense \cite{CoverThomas}.}

In the following the joint encoding strategy in~\cite{GherekhlooChaabanSezginIZS14} is briefly introduced in order to motivate that more sophisticated schemes are needed when we are considering the non-equiprobable connectivity states. 

\section{\textcolor{c3}{Joint encoding strategy in TIM}}
To discuss the main idea of joint encoding, consider a network with three Tx and Rx in which the interference links can appear in a cyclic manner shown in Fig.~\ref{fig:cyclic manner}. 
In this example, the network has a total of 8 states. Among all these, there are three states \textcolor{c3}{with a single} interference \textcolor{c3}{link}. We denote these interference states by I-states (see Fig.~\ref{fig:IRstates}). 
\textcolor{c3}{Considering all transmitters to be active, all receivers will get interference in I-states.}
As it is shown in \cite{GherekhlooChaabanSezginIZS14}, we can resolve the interferences at the receivers by using a single state in which all the interference links appear. This resolving state is denoted by R-state in Fig.~\ref{fig:IRstates}. 
Note that no extra interference link (\textcolor{c3}{apart from} the interference links in I-states) appears in the R-state. 
In a network with two users, the R-state is the \textcolor{c3}{state in which both interference links are present (as cross links). By using the cross links in this state the interference caused in Z- and S-channels can be resolved \cite{SunGengJafar}.} 

In general, this joint encoding scheme is not enough to obtain the capacity of three user interference channels. As an example, we consider a subset of all possible connectivity states shown in Fig.~\ref{fig:All_cases}. In what follows, we highlight the key facts which \textcolor{rev}{make the discussed joint encoding scheme non-applicable} to the states in Fig.~\ref{fig:All_cases}. These facts provide valuable insights to study this network.

\newcommand{\startnodescyclic}[7]{
\node at (0.5,0) [above] {#1};
\node (t1) at (0,0) [inner sep=0]{};
\node (t1s) at (0,0) [inner sep=0,left]{#2};
\node (t2) at (0,-1) [inner sep=0]{};
\node (t2s) at (0,-1) [inner sep=0,left]{#3};
\node (t3) at (0,-2) [inner sep=0] {};
\node (t3s) at (0,-2) [inner sep=0,left] {#4};
\node (r1) at (2,0) [inner sep=0] {};
\node (r1s) at (2,0) [inner sep=0,right] {#5};
\node (r2) at (2,-1) [inner sep=0] {};
\node (r2s) at (2,-1) [inner sep=0,right] {#6};
\node (r3) at (2,-2) [inner sep=0] {};
\node (r3s) at (2,-2) [inner sep=0,right] {#7};
\draw[->] (t1) to (r1);
\draw[->] (t2) to (r2);
\draw[->] (t3) to (r3);}
\newcommand{\redpathcyclic}[2]{\draw[dashed,->,red] (#1) to (#2);}
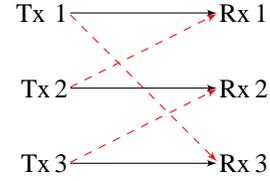
\begin{figure} 
\centering
\begin{tikzpicture}
\startnodescyclic{}{Tx~1}{Tx\,2}{Tx\,3}{Rx\,1}{Rx\,2}{Rx\,3}
\redpathcyclic{t2}{r1}
\redpathcyclic{t3}{r2}
\redpathcyclic{t1}{r3}
\end{tikzpicture}
\caption{The interference links can appear in a cyclic manner.}
\label{fig:cyclic manner}
\end{figure}
\newcommand{\IRstates}[1]{
\node at (0.5,0) [above] {#1};
\node (t1) at (0,0) [inner sep=0] {};
\node (t2) at (0,-0.4) [inner sep=0] {};
\node (t3) at (0,-0.8) [inner sep=0] {};
\node (r1) at (1,0) [inner sep=0] {};
\node (r2) at (1,-0.4) [inner sep=0] {};
\node (r3) at (1,-0.8) [inner sep=0] {};
\draw[->] (t1) to (r1);
\draw[->] (t2) to (r2);
\draw[->] (t3) to (r3);}
\newcommand{\IRredpath}[2]{\draw[->,red] (#1) to (#2);}
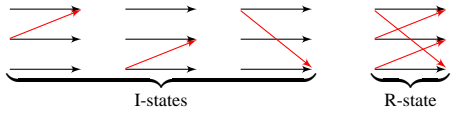
\begin{figure} 
\centering
$\underbrace{\begin{tikzpicture}
\IRstates{}
\IRredpath{t2}{r1}
\end{tikzpicture}
\hspace{5mm}
\begin{tikzpicture}
\IRstates{}
\IRredpath{t3}{r2}
\end{tikzpicture}
\hspace{5mm}
\begin{tikzpicture}
\IRstates{}
\IRredpath{t1}{r3}
\end{tikzpicture}}_{\text{I-states}}$
\hspace{5mm}
$\underbrace{\begin{tikzpicture}
\IRstates{}
\IRredpath{t1}{r3}
\IRredpath{t2}{r1}
\IRredpath{t3}{r2}
\end{tikzpicture}}_{\text{R-state}}$
\caption{The interference links in I-states appear in R-state in a cyclic manner without additional interference link. Therefore, all interferences caused in I-states can be resolved by using a single R-state \cite{GherekhlooChaabanSezginIZS14}.}
\label{fig:IRstates}
\end{figure}
\newcommand{\achievability}[7]{
\node at (0.5,0) [above] {#1};
\node (t1) at (0,0) [inner sep=0]{};
\node (t1s) at (0,0) [inner sep=0,left]{#2};
\node (t2) at (0,-0.4) [inner sep=0]{};
\node (t2s) at (0,-0.4) [inner sep=0,left]{#3};
\node (t3) at (0,-0.8) [inner sep=0] {};
\node (t3s) at (0,-0.8) [inner sep=0,left] {#4};
\node (r1) at (1,0) [inner sep=0] {};
\node (r1s) at (1,0) [inner sep=0,right] {#5};
\node (r2) at (1,-0.4) [inner sep=0] {};
\node (r2s) at (1,-0.4) [inner sep=0,right] {#6};
\node (r3) at (1,-0.8) [inner sep=0] {};
\node (r3s) at (1,-0.8) [inner sep=0,right] {#7};
\draw[->] (t1) to (r1);
\draw[->] (t2) to (r2);
\draw[->] (t3) to (r3);}
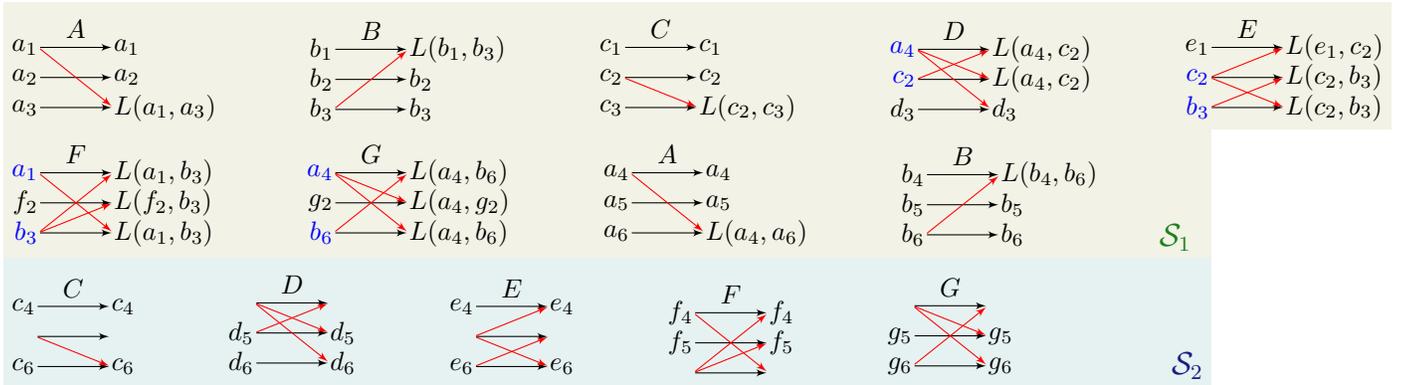
\begin{figure*} 
\hilightone{\begin{tikzpicture}
\achievability{$\D$}{$a_1$}{$a_2$}{$a_3$}{$a_1$}{$a_2$}{$L(a_1,a_3)$}
\redpath{t1}{r3}
\end{tikzpicture}
\hspace{10mm}
\begin{tikzpicture}
\achievability{$\DD$}{$b_1$}{$b_2$}{$b_3$}{$L(b_1,b_3)$}{$b_2$}{$b_3$}
\redpath{t3}{r1}
\end{tikzpicture}
\hspace{10mm}
\begin{tikzpicture}
\achievability{$\CC$}{$c_1$}{$c_2$}{$c_3$}{$c_1$}{$c_2$}{$L(c_2,c_3)$}
\redpath{t2}{r3}
\end{tikzpicture}
\hspace{10mm}
\begin{tikzpicture}
\achievability{$\I$}{\textcolor{res}{$a_4$}}{\textcolor{res}{$c_2$}	}{$d_3$}{$L(a_4,c_2)$}{$L(a_4,c_2)$}{$d_3$}
\redpath{t1}{r3}
\redpath{t2}{r1}
\redpath{t1}{r2}
\end{tikzpicture}
\hspace{10mm}
\begin{tikzpicture}
\achievability{$\J$}{$e_1$}{\textcolor{res}{$c_2$}}{\textcolor{res}{$b_3$}}{$L(e_1,c_2)$}{$L(c_2,b_3)$}{$L(c_2,b_3)$}
\redpath{t2}{r3}
\redpath{t2}{r1}
\redpath{t3}{r2}
\end{tikzpicture}}
\hspace{10mm}
\hilightone{\begin{tikzpicture}
\achievability{$\K$}{\textcolor{res}{$a_1$}}{$f_2$}{\textcolor{res}{$b_3$}}{$L(a_1,b_3)$}{$L(f_2,b_3)$}{$L(a_1,b_3)$}
\redpath{t1}{r3}
\redpath{t3}{r1}
\redpath{t3}{r2}
\end{tikzpicture}
\hspace{10mm}
\begin{tikzpicture}
\achievability{$\KK$}{\textcolor{res}{$a_4$}}{$g_2$}{\textcolor{res}{$b_6$}}{$L(a_4,b_6)$}{$L(a_4,g_2)$}{$L(a_4,b_6)$}
\redpath{t1}{r3}
\redpath{t3}{r1}
\redpath{t1}{r2}
\end{tikzpicture}
\hspace{10mm}
\begin{tikzpicture}
\achievability{$\D$}{$a_4$}{$a_5$}{$a_6$}{$a_4$}{$a_5$}{$L(a_4,a_6)$}
\redpath{t1}{r3}
\end{tikzpicture}
\hspace{10mm}
\begin{tikzpicture}
\achievability{$\DD$}{$b_4$}{$b_5$}{$b_6$}{$L(b_4,b_6)$}{$b_5$}{$b_6$}
\redpath{t3}{r1}
\end{tikzpicture} \quad \quad   \large\textcolor{highlight_S1}{$\mathcal{S}_1$} }
\\
\hilighttwo{\begin{tikzpicture}
\achievability{$\CC$}{$c_4$}{}{$c_6$}{$c_4$}{}{$c_6$}
\redpath{t2}{r3}
\end{tikzpicture}
\hspace{10mm}
\begin{tikzpicture}
\achievability{$\I$}{}{$d_5$}{$d_6$}{}{$d_5$}{$d_6$}
\redpath{t1}{r3}
\redpath{t2}{r1}
\redpath{t1}{r2}
\end{tikzpicture}
\hspace{10mm}
\begin{tikzpicture}
\achievability{$\J$}{$e_4$}{}{$e_6$}{$e_4$}{}{$e_6$}
\redpath{t2}{r3}
\redpath{t2}{r1}
\redpath{t3}{r2}
\end{tikzpicture}
\hspace{10mm}
\begin{tikzpicture}
\achievability{$\K$}{$f_4$}{$f_5$}{}{$f_4$}{$f_5$}{}
\redpath{t1}{r3}
\redpath{t3}{r1}
\redpath{t3}{r2}
\end{tikzpicture}
\hspace{10mm}
\begin{tikzpicture}
\achievability{$\KK$}{}{$g_5$}{$g_6$}{}{$g_5$}{$g_6$}
\redpath{t1}{r3}
\redpath{t3}{r1}
\redpath{t1}{r2}
\end{tikzpicture}\quad \quad  \quad \quad \quad \quad \large\textcolor{highlight_S2}{$\mathcal{S}_2$}}
\caption{$L(\cdot,\cdot)$ operator is a linear operation which depends on the channel coefficients \textcolor{c3}{not known at Tx}. Therefore, the $L$ operator changes over the time. By using joint encoding over the states together with symbol extension, we can transmit 29 symbols over 14 channel uses.}
\label{fig:Achievability2}
\end{figure*}
\begin{itemize}
\item In Fig.~\ref{fig:All_cases}, there is no state in which the cross channel without an extra interference link appears. In other words, there is no single R-state. As an example, the interference links in states $\D$ and $\DD$ appear in $\K$. However, an extra interference link exists in state $\K$, which is the link between Tx\,3 and Rx\,2. This extra interference link in $\K$ together with its complementary interference link (the link between Tx\,2 and Rx\,3) in state $\CC$ appear in state $\J$. Again, in state $\J$, we have an extra interference link between Tx\,2 and Rx\,1 in addition to the cross channel. Due to this entanglement, interference symbols cannot be resolved over a single state (in contrast to the known schemes in \cite{SunGengJafar} and \cite{GherekhlooChaabanSezginIZS14}) but \textit{successively} within a group of states.
\item The interference links in states $\D$ and $\DD$ appear together in states $\K$ and $\KK$. 
\textcolor{c3}{Roughly speaking, both of these states can be used to resolve the interference caused in states $\D$ and $\DD$. In order to fully utilize the benefit associated in resolving over cross channels of both states $\K$ and $\KK$, we need to involve a new pair of $\D$ and $\DD$.} This leads to the idea of joint encoding based on \textit{state splitting}.
\end{itemize}
The exact explanation of the scheme is given in proof of Theorem \ref{Theorem_New_Ach}.
\section{Main Result}
\label{sec:Main_Result}
The following theorem provides the main result of this work.
\begin{theorem}
\textcolor{rev}{The sum-capacity for} the three user interference \textcolor{rev}{wired} channel with alternating connectivity shown in Fig.~\ref{fig:All_cases} is $(2+\lambda)\log|\mathbb{GF}|$, where $$\lambda= \min\left\{\frac{\lambda_\D}{2},\frac{\lambda_\DD}{2},\lambda_\CC,\lambda_\I,\lambda_\J,\lambda_\K,\lambda_\KK\right\}.$$
\label{Theorem_New_Ach}
\end{theorem}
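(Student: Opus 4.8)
The plan is to prove the achievability and the converse separately, matching at $(2+\lambda)\log|\mathbb{GF}|$. For achievability I would first isolate a baseline that delivers $2\log|\mathbb{GF}|$ for \emph{every} distribution of the state probabilities. In each three-interference state $\I,\J,\K,\KK$ exactly one transmitter feeds two of the interference links; switching that transmitter off leaves the two remaining desired links interference-free, so two symbols get through. In the single-interference states $\D,\DD,\CC$ two receivers are already clean, again giving two symbols per slot. This is precisely the group $\mathcal{S}_2$ of Fig.~\ref{fig:Achievability2}, and it certifies that $2\log|\mathbb{GF}|$ is always available.

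The core is then the joint-encoding-with-state-splitting group $\mathcal{S}_1$ of Fig.~\ref{fig:Achievability2}. It spans nine state-slots in the proportions (two copies of $\D$, two copies of $\DD$, and one each of $\CC,\I,\J,\K,\KK$) and delivers nineteen symbols, one more than the eighteen that a pure baseline would give on the same nine slots. The key step is to verify decodability: each receiver accumulates a system of linear equations --- the $L(\cdot,\cdot)$ combinations, unknown at the transmitters but known at the receivers --- across the entire group, and one must check that the entanglement unravels \emph{successively} (the JEMS idea). The symbol $c_2$ that arrives cleanly in $\CC$ unlocks the resolution in $\I$ and $\J$, while $\K$ and $\KK$ \emph{both} serve as resolving states for the $\D$/$\DD$ interference pattern, which is exactly why two fresh copies of $\D$ and $\DD$ must be split into the group. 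Granting $m$ such groups and running the baseline on all leftover slots yields $2n+m$ symbols over $n$ channel uses; since each group consumes $\D$ and $\DD$ twice and the other five states once, the achievable ratio $m/n$ is capped by $\min\{\lambda_\D/2,\lambda_\DD/2,\lambda_\CC,\lambda_\I,\lambda_\J,\lambda_\K,\lambda_\KK\}=\lambda$, which gives the claimed sum-rate.

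For the converse I would combine Fano's inequality with a family of genie-aided bounds indexed by the states: to each state one associates a genie that hands selected received signals (or interfering messages) to the receivers so that, after cancellation, the residual entropy is controlled and the sum-rate is pinned down. Each such bound should take the form $R_1+R_2+R_3\le(2+g_X)\log|\mathbb{GF}|$, with $g_\CC=\lambda_\CC,\dots,g_\KK=\lambda_\KK$ for the five states and $g_\D=\lambda_\D/2,\ g_\DD=\lambda_\DD/2$ for the two split states; keeping the tightest one returns $2+\lambda$. I expect the main obstacle on the achievability side to be the careful bookkeeping of the successive resolution, so that all nineteen symbols are provably recovered at their intended receivers; and on the converse side, the genie argument for $\D$ and $\DD$ that yields the factor $\tfrac12$. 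The latter is the information-theoretic counterpart of the state splitting --- the interference of $\D$ and of $\DD$ can each be cleared by two distinct resolving states ($\K$ and $\KK$) --- so that extracting the two-to-one ratio from a single entropy inequality, rather than from a scheme-specific count, is the delicate point.
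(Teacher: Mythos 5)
Your proposal follows essentially the same route as the paper: the baseline of $2\log|\mathbb{GF}|$ per slot is the paper's $\mathcal{S}_2$, the nine-slot group with two split copies of $\D$ and $\DD$ delivering $19$ symbols is exactly the paper's $\mathcal{S}_1$ (with the same successive-resolution chain $\CC\to\J\to\I$ and the same counting $19n\lambda+2n(1-9\lambda)=n(2+\lambda)$), and your converse is the paper's family of seven genie-aided bounds $R_\Sigma\le(2+g_X)\log|\mathbb{GF}|$, of which the paper only works out the $\lambda_{\DD}/2$ case in detail. The two points you flag as delicate (verifying the nine independent equations at each receiver, and extracting the factor $\tfrac{1}{2}$ from the genie argument for $\D$ and $\DD$) are indeed the only places where the paper supplies substance beyond your outline, and even there it largely asserts rather than proves the remaining cases.
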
	
\begin{proof}
To establish Theorem~\ref{Theorem_New_Ach}, we need to find an optimal achievability scheme. The optimality of the scheme is shown by comparing it with a tight upper bound of the sum-capacity. We start by introducing an achievability scheme leading to a \textcolor{c3}{sum-rate} lower bound denoted \textcolor{c3}{$R_e$}.
\subsection*{Achievability:}	
The achievability scheme is based on JEMS together with state splitting.
\textcolor{c3}{The goal is to resolve the interference signals caused in some states. To do this, as in \cite{SunGengJafar}, we retransmit the interference signals in a state in which the links corresponding to the interference symbols appear.}
In our case, the interference links in states $\D$ and $\DD$ appear together in states $\K$ and $\KK$. The cross channels in states $\K$ and $\KK$ can be used for resolving the interference signal caused in states $\D$ and $\DD$. In order to use the cross channel in both states $\K$ and $\KK$, we need to consider two pairs of $\D$ and $\DD$. \textcolor{c3}{Therefore, we split all states into two parts, and perform joint encoding over 9 states represented by $\mathcal{S}_1$ shown in Fig.~\ref{fig:Achievability2}.} As it is shown in Fig.~\ref{fig:Achievability2}, the interference symbols in states $\D$ and $\DD$ are resolved by using states $\K$ and $\KK$. In both states $\K$ and $\KK$, there is an interference link in addition to the cross channel. 
\textcolor{c3}{Interestingly, these interference links together with their complementary interference links in states $\CC$ and $\J$ appear as cross channels in states $\J$ and $\I$, respectively. Therefore, the interference symbols caused in states $\{\K,\CC\}$ and $\{\KK,\J\}$ are resolved using the cross channels in states $\J$ and $\I$, respectively.} \textcolor{c3}{Now, consider the encoding scheme over the states in $\mathcal{S}_1$, shown in Fig.~\ref{fig:Achievability2}. It is shown that each Rx needs to decode 9 symbols. As an example, Rx1 needs to decode $\{a_1,b_1,b_3,c_1,a_4,c_2,e_1,b_6,b_4\}$.} This is possible due to 9 linear independent equations which are available at each Rx. \textcolor{c3}{Compared to the achievability scheme in \cite{GherekhlooChaabanSezginIZS14}, resolving the interference is not performed over a single state. Therefore, we use JEMS over the states in $\mathcal{S}_1$ to transmit 19 symbols reliably over 9 channel uses.} Note that by considering these states separately, we cannot transmit more than 2 symbols per channel use. 

In order to complete the achievability scheme, we need to consider the states in $\mathcal{S}_2=(\CC, \I, \J, \K, \KK )$ \textcolor{c3}{in addition to $\mathcal{S}_1$}. 
As it is shown in Fig.~\ref{fig:Achievability2}, in all these states, we can transmit 2 symbols reliably per channel use. \textcolor{c3}{Note that the encodings are different for same states (for example $\CC$) in $\mathcal{S}_1$ and $\mathcal{S}_2$.} This fact highlights the notion of state splitting. 
\textcolor{c3}{Since every symbol is chosen from $\mathbb{GF}$ with the entropy $\log|\mathbb{GF}|$, the achievable sum-rates for equiprobable states in $\mathcal{S}_1$ and $\mathcal{S}_2$ are as follows}
\begin{align}
\textcolor{c3}{R_e} =  \begin{cases}
\frac{19}{9}\, \textcolor{c3}{\log|\mathbb{GF}|} \quad &\text{for } \mathcal{S}_1, \\ 
2 \,\textcolor{c3}{\log|\mathbb{GF}|} \quad    &\text{for } \mathcal{S}_2.
\end{cases}\notag 
\end{align}
\textcolor{c3}{Considering $n$ total channel uses, we perform joint encoding across $2n\lambda $ channel uses of states $\{\D,\DD\}$ together with $n\lambda $ channel uses of states $\{\CC,\I,\J,\K,\KK\}$ (see $\mathcal{S}_1$ in Fig.~\ref{fig:Achievability2}). Note that $\lambda$ is defined as $\min\{\frac{\lambda_\D}{2},\frac{\lambda_\DD}{2},\lambda_\CC,\lambda_\I,\lambda_\J,\lambda_\K,\lambda_\KK\}$. Roughly speaking, during the overall transmission, $\mathcal{S}_1$ cannot occur in more channel uses than $n\lambda $. Therefore, we transmit $19n\lambda$ symbols based on the joint encoding over states in $\mathcal{S}_1$ during the overall transmission. In the remaining channel uses, we transmit $2$ symbols. Therefore, the total number of transmitted symbols over $n$ channel uses is given by
\begin{align}
19n\lambda & +  2n  [(\lambda_\D-2\lambda) + (\lambda_\DD-2\lambda)+(\lambda_\CC-\lambda)  \notag \\ 
& + (\lambda_\I-\lambda) +  (\lambda_\J-\lambda)+ (\lambda_\K-\lambda) +  (\lambda_\KK-\lambda)  ] \notag \\
& = n(2+\lambda). \label{eq:Achievability2}
\end{align}
\textcolor{rev}{By dividing \eqref{eq:Achievability2} by the number of channel uses $n$ and multiplying it by $\log|\mathbb{GF}|$ (since the symbol are chosen from $\mathbb{GF}$ with the entropy $\log|\mathbb{GF}|$) we obtain the following achievable sum-rate}
\begin{align}
R_e \leq \left( 2 + \lambda\right)\log|\mathbb{GF}|. \label{eq:LB_new_Ach}
\end{align} }

\subsection*{Upper bound:}
We establish the upper bound \textcolor{rev}{on the sum-capacity} as follows. \textcolor{c3}{Consider the sum-rate}
\begin{align}
n R_\Sigma =&  \sum_{i=1}^{3} H(W_i) \notag \\
=&  \sum_{i=1}^{3} H(W_i) + H(W_i|\Y_i) - H(W_i|\Y_i) \notag \\
\overset{(a)}{\leq} &  \sum_{i=1}^{3} I(W_i;\Y_i) + 3n\epsilon_n, \label{eq:R_sum} 
\end{align}
where ($a$) follows from Fano's inequality \textcolor{c2}{and $\epsilon_n\rightarrow 0$ \textcolor{c3}{for} $n\rightarrow \infty$}. By multiplying the inequality in \eqref{eq:R_sum} by 2, every mutual information appears twice, which corresponds to creating (virtually) three additional receivers. In the next step, we give some side information to the receivers. Therefore, we write
\begin{align}
2n R_\Sigma \leq &   I(W_1;\Y_1,\X_{2,\I},\X_{3,\K},\X_{3,\KK})  \notag \\  
				 & + I(W_2;\Y_2,\X_{3,\J})  + I(W_3;\Y_3)  \notag \\ 
        		  & +  I(W_1;\Y_1,\X_{3,\K},\X_{3,\KK})  \notag \\  
 		  		 & + I(W_2;\Y_2,\X_{1,\I},\X_{3,\J})  \notag \\   
 		  		 & + I(W_3;\Y_3,\X_{1,\K},\X_{1,\KK},\X_{2,\J})  + 6n\epsilon_n.\notag
\end{align}
By using the chain rule and since the messages of three transmitters are independent from each other, we write
\begin{align}
2n R_\Sigma \leq &   I(W_1;\Y_1|\X_{2,\I},\X_{3,\K},\X_{3,\KK})  \notag \\  
				 & + I(W_2;\Y_2|\X_{3,\J})  + I(W_3;\Y_3)  \notag \\ 
        		  & +  I(W_1;\Y_1|\X_{3,\K},\X_{3,\KK})  \notag \\  
 		  		 & + I(W_2;\Y_2|\X_{1,\I},\X_{3,\J})  \notag \\   
 		  		 & + I(W_3;\Y_3|\X_{1,\K},\X_{1,\KK},\X_{2,\J})  + 6n\epsilon_n.\label{eq:NA_4_1_1}
\end{align}
By expressing the mutual information as entropy terms, \eqref{eq:NA_4_1_1} is restated as
\begin{align}
2n R_\Sigma \leq &    H(\Y_1|\X_{2,\I},\X_{3,\K},\X_{3,\KK})\notag \\ 
				 &   -H(\Y_1|\X_{2,\I},\X_{3,\K},\X_{3,\KK},W_1)  \notag \\  
				 &   +H(\Y_2|\X_{3,\J})  -  H(\Y_2|\X_{3,\J},W_2) + H(\Y_3) \notag \\ 
				 &  - H(\Y_3|W_3) + H(\Y_1|\X_{3,\K},\X_{3,\KK}) \notag \\  
 		  		 &  - H(\Y_1|\X_{3,\K},\X_{3,\KK},W_1) + H(\Y_2|\X_{1,\I},\X_{3,\J}) \notag \\   
 		  		 &  - H(\Y_2|\X_{1,\I},\X_{3,\J},W_2) \notag \\    
 		  		 &  + H(\Y_3|\X_{1,\K},\X_{1,\KK},\X_{2,\J}) \notag \\
 		  		 &  - H(\Y_3|\X_{1,\K},\X_{1,\KK},\X_{2,\J},W_3)  + 6n\epsilon_n.\label{eq:NA_4_1}
\end{align}
By considering \eqref{eq:received_symbol} together with the fact that channel coefficients are completely known at the destination, the expression in \eqref{eq:NA_4_1} can be rewritten as in \eqref{eq:NA_4_2} on the top of the next page. By using the chain rule, together with the facts that conditioning does not increase entropy, and that the messages of the users are independent of each other, the individual terms in \eqref{eq:NA_4_2} can be
rewritten as in \eqref{eq:NA_4_7}-\eqref{eq:NA_4_8}. 
\begin{figure*}[t]
\begin{align}
2n R_\Sigma \leq &    H(\X_{1,\{\D,\CC,\I,\K,\KK\}},\Y_{1,\{\DD,\J\}}|\X_{2,\I},\X_{3,\{\K,\KK\}}) 				        				-H(\X_{3,\DD},\X_{2,\J}|\X_{2,\I},\X_{3,\{\K,\KK\}},W_1)  \notag \\  
				 &   +H(\X_{2,\{\D,\DD,\CC,\J\}},\Y_{2,\{\I,\K,\KK\}}|\X_{3,\J})  -  H(\X_{1,\{\I,\KK\}},\X_{3,\K}|\X_{3,\J},W_2) 		 				 \notag \\ &
				  + H(\Y_3)   - H(\X_{1,\{\D,\I,\K,\KK\}},\X_{2,\{\CC,\J\}}|W_3) \notag \\
				 & + H(\X_{1,\{\D,\CC,\K,\KK\}},\Y_{1,\{\DD,\I,\J\}}|\X_{3,\{\K,\KK\}}) - H(\X_{2,\{\I,\J\}},\X_{3,\DD}|\X_{3,\{\K,\KK\}},W_1)  \notag \\ 
				 &+ H(\X_{2,\{\D,\DD,\CC,\I,\J\}},\Y_{2,\{\K,\KK\}}|\X_{1,\I},\X_{3,\J}) - H(\X_{3,\K},\X_{1,\KK}|\X_{1,\I},\X_{3,\J},W_2) \notag \\  
 		  		 &  + H(\X_{3,\{\DD,\J,\K,\KK\}},\Y_{3,\{\D,\CC,\I\}}|\X_{1,\{\K,\KK\}},\X_{2,\J})  - H(\X_{1,\{\D,\I\}}, \X_{2,\CC}|\X_{1,\{\K,\KK\}},\X_{2,\J},W_3) + 6n\epsilon_n \label{eq:NA_4_2} 
\end{align}
\hrule
\begin{align}
  H(\X_{1,\{\D,\CC,\I,\K,\KK\}},\Y_{1,\{\DD,\J\}}|\X_{2,\I},\X_{3,\K},\X_{3,\KK}) \leq & H(\X_{1,\{\D,\I,\K,\KK\}}) + H(\X_{1,\CC},\Y_{1,\{\DD,\J\}}) \label{eq:NA_4_7} \\
H(\X_{3,\DD},\X_{2,\J}|\X_{2,\I},\X_{3,\{\K,\KK\}},W_1)   = & H(\X_{2,\J}|\X_{2,\I})+H(\X_{3,\DD}|\X_{3,\{\K,\KK\}})  \\
H(\X_{2,\{\D,\DD,\CC,\J\}},\Y_{2,\{\I,\K,\KK\}}|\X_{3,\J}) \leq & H(\X_{2,\{\CC,\J\}}) + H(\X_{2,\{\D,\DD\}},\Y_{2,\{\I,\K,\KK\}}) \\
H(\X_{1,\{\I,\KK\}},\X_{3,\K}|\X_{3,\J},W_2)  = &  H(\X_{3,\K}|\X_{3,\J}) + H(\X_{1,\KK}) + H(\X_{1,\I}|\X_{1,\KK}) \\ 
 H(\X_{1,\{\D,\I,\K,\KK\}},\X_{2,\{\CC,\J\}}|W_3)   = &  H(\X_{2,\{\J,\CC\}}) + H(\X_{1,\{\D,\I,\K,\KK\}}) \\
H(\X_{1,\{\D,\CC,\K,\KK\}},\Y_{1,\{\DD,\I,\J\}}|\X_{3,\{\K,\KK\}}) \leq & H(\X_{1,\KK}) + H(\X_{1,\K}|\X_{1,\KK}) + H(\X_{1,\D}|\X_{1,\{\KK,\K\}})  \notag \\ & + H(\Y_{1,\{\I,\J,\DD\}},\X_{1,\CC}) \\
 H(\X_{2,\{\I,\J\}},\X_{3,\DD}|\X_{3,\{\K,\KK\}},W_1)  = & H(\X_{3,\DD}|\X_{3,\{\K,\KK\}}) + H(\X_{2,\{\I,\J\}})  \\
 H(\X_{2,\{\D,\DD,\CC,\I,\J\}},\Y_{2,\{\K,\KK\}}|\X_{1,\I},\X_{3,\J}) \leq & H(\X_{2,\{\I,\J\}}) + H(\X_{2,\CC}|\X_{2,\J}) + H(\X_{2,\{\D,\DD\}},\Y_{2,\{\K,\KK\}}) \\ 
 H(\X_{3,\K},\X_{1,\KK}|\X_{1,\I},\X_{3,\J},W_2)  = &  H(\X_{3,\K}|\X_{3,\J})+ H(\X_{1,\KK}|\X_{1,\I}) \\
 H(\X_{3,\{\DD,\J,\K,\KK\}},\Y_{3,\{\D,\CC,\I\}}|\X_{1,\{\K,\KK\}},\X_{2,\J})  \leq & H(\X_{3,\J}) + H(\X_{3,\K}|\X_{3,\J}) + H(\X_{3,\KK}|\X_{3,\{\J,\K\}}) \notag \\ &+ H(\X_{3,\DD}|\X_{3,\{\K,\KK\}}) + H(\Y_{3,\{\D,\I,\CC\}}) \\
  H(\X_{1,\{\D,\I\}},\X_{2,\CC}|\X_{1,\{\K,\KK\}},\X_{2,\J},W_3) = &  H(\X_{1,\D}|\X_{1,\{\K,\KK\}}) + H(\X_{1,\I}|\X_{1,\{\K,\KK,\D\}})+ H(\X_{2,\CC}|\X_{2,\J}) \label{eq:NA_4_8}
\end{align}
\hrule

\end{figure*}
We can see that by substituting \textcolor{c3}{the right hand side of} \eqref{eq:NA_4_7}-\eqref{eq:NA_4_8} into \eqref{eq:NA_4_2} many terms will cancel out \textcolor{rev}{(it follows directly after summing up \eqref{eq:NA_4_7}-\eqref{eq:NA_4_8})} and we can rewrite \eqref{eq:NA_4_2} as
\begin{align}
2nR_\Sigma  \leq &  H(\X_{1,\CC},\Y_{1,\{\DD,\J\}}) + H(\X_{2,\{\D,\DD\}},\Y_{2,\{\I,\K,\KK\}}) \notag \\ 
				 & + H(\Y_3) + H(\X_{1,\K}|\X_{1,\KK})  + H(\Y_{1,\{\I,\J,\DD\}},\X_{1,\CC}) \notag \\
				 & + H(\X_{2,\{\D,\DD\}},\Y_{2,\{\K,\KK\}})  + H(\X_{3,\J})  \notag \\ 
				 & + H(\X_{3,\KK}|\X_{3,\{\J,\K\}}) + H(\Y_{3,\{\D,\I,\CC\}}) + 6n\epsilon_n. \label{eq:NA_4_9}
\end{align}
After dropping \textcolor{c3}{the conditioning in all terms}, the inequality \eqref{eq:NA_4_9} can be upper bounded by the following expression
\begin{align}
2nR_\Sigma  \leq & \textcolor{rev}{n}\log|\mathbb{GF}| [\lambda_{\{\CC,\DD,\J\}} + \lambda_{\{\D,\DD,\I,\K,\KK\}} + 1 + \lambda_{\K} \notag \\ 
				 & +\lambda_{\{\I,\J,\DD,\CC\}} + \lambda_{\{\D,\DD,\K,\KK\}} + \lambda_{\J}  \notag \\ 
				 & + \lambda_{\KK} + \lambda_{\{\D,\I,\CC\}}] + 6n\epsilon_n, \label{eq:NA_4_10}
\end{align}
where we used the chain rule, the facts that conditioning does not increase the entropy, and that the entropy of discrete random variable in $\mathbb{GF}$ is upper bounded by $\log|\mathbb{GF}|$ \cite{CoverThomas}. Next, we divide the inequality in \eqref{eq:NA_4_10} by $2n$, and let $n\rightarrow \infty$. Then, we obtain
\begin{align}
R_\Sigma  \leq & \log|\mathbb{GF}|\left(2 + \frac{\lambda_\DD}{2}\right). \label{eq:NA_4_11}
\end{align}
\textcolor{c3}{Using the same technique with appropriate genie information and smart utilizations of chain rule, we can also establish the following upper bound
\begin{align}
R_\Sigma  \leq & \log|\mathbb{GF}|\left(2 + \min\left\lbrace\frac{\lambda_\D}{2}, \lambda_\CC, \lambda_\I, \lambda_\J ,\lambda_\K , \lambda_\KK\right\rbrace\right). \label{eq:NA_4_12}
\end{align}
Due to the page limitation, we do not present the proofs. Considering all bounds in \eqref{eq:NA_4_11}, \eqref{eq:NA_4_12}, we get 
\begin{align}
R_\Sigma  \leq & \log|\mathbb{GF}|\left(2 + \lambda\right), \label{eq:NA_4_13}
\end{align}}
which agrees with the lower bound in \eqref{eq:LB_new_Ach} and completes the proof of Theorem~\ref{Theorem_New_Ach}. 
\end{proof}
\section{Conclusion}
We studied the sum-capacity of the three users interference wired channel with an alternating connectivity with only topological knowledge at the transmitters. It is assumed that the connectivity states are non-equiprobable. \textcolor{rev}{As shown in \cite{Jafar}, this result translates to a DoF result for the corresponding wireless network.} We proposed a new achievability scheme which is based on joint encoding across connectivity states and decoding using a multiple interference resolving states (JEMS) combined with splitting the connectivity states. By establishing a genie aided upper bound, the optimality of transmission scheme is shown. 

\balance
\bibliography{myBib}
\end{document}